\theoremstyle{plain}
\newtheorem{theorem}{Theorem}[section]
\newtheorem{corollary}[theorem]{Corollary}
\theoremstyle{definition}
\newtheorem{definition}[theorem]{Definition}
\newtheorem{example}[theorem]{Example}
\theoremstyle{remark}
\newtheorem{remark}[theorem]{Remark}
\begin{document}
\bibliographystyle{plain} 
\title{Estimating cellular redundancy in networks of genetic expression}
\author[1,2,3]{Raffaella Mulas}
\author[2,3]{Michael J. Casey}
\affil[1]{The Alan Turing Institute, London, UK}
\affil[2]{Mathematical Sciences, University of Southampton, UK}
\affil[3]{Institute of Life Sciences, University of Southampton, UK}
\date{}
\maketitle

\allowdisplaybreaks[4]

	\begin{abstract}
Networks of genetic expression can be modelled by hypergraphs with the additional structure that real coefficients are given to each vertex-edge incidence. The spectra, i.e.\ the multiset of the eigenvalues, of such hypergraphs, are known to encode structural information of the data. We show how these spectra can be used, in particular, in order to give an estimation of cellular redundancy, a novel measure of gene expression heterogeneity, of the network. We analyze some simulated and real data sets of gene expression for illustrating the new method proposed here.
	\end{abstract}

\section{Introduction}

Single-cell RNA-sequencing experiments are ubiquitous in cell biology, measuring the expression of each gene (number of mRNA molecules) in individual cells \cite{hwang2018single}. By measuring the whole transcriptomes of individual cells, single-cell sequencing captures the cellular heterogeneity in gene expression. Expression heterogeneity primarily arises from differential gene expression between distinct cell types and continuous variation within developing cells types \cite{trapnell2015defining, saelens2019comparison, xia2019periodic, casey2020theory}. Said molecular cell types can be reversed engineered from the measured heterogeneity through the process of unsupervised clustering \cite{kiselev2019challenges, luecken2019current, stuart2019comprehensive}.

The process of unsupervised clustering is not straightforward: most clustering methods will produce clusters from random noise, i.e.~even if no biologically relevant clusters are present in the data set, and the number of clusters is typically not known \textit{a priori} \cite{von2012clustering, vandenbon2020clustering}. But the number of clusters should correlate positively with increasing heterogeneity: the more differentially expressing cell types present in a population, the greater heterogeneity gene expression. By quantifying the total heterogeneity of a cellular population, we should be able to infer the extent (if any) of cluster structure present in the data

Measuring the total heterogeneity of the transcriptome is difficult. Various univariate, single-gene measures of heterogeneity have been developed based on variance and entropy \cite{brennecke2013accounting, love2014moderated, grun2014validation, hafemeister2019normalization, townes2019feature, sarkar2020separating, breda2021bayesian, liu2020entropy}. But cell types emerge from the concerted action of many genes: we require a multivariate measure of heterogeneity. Existing univariate measures fail to robustly generalize to the multivariate, whole-transcriptome setting due to the curse of dimensionality: variance requires some measure of distance, but distances inflate into meaninglessness when considered over many dimensions, and entropy-based measures utilize the joint distribution, which becomes increasingly poorly sampled when considered over an increasing number of genes \cite{beyer1999nearest}. 

Instead of attempting to generalize an existing univariate measure, we here derive a novel (inverse) measure of multivariate heterogeneity based on spectral hypergraph theory: an emerging field of mathematics that has been recently applied to physics \cite{MKJ,BATTISTON20201,Synchronization,Carletti,Moreno,BKJM}, and that can be applied to biochemical data analysis \cite{JostMulas2020}.

Graph theory forms a significant component of single-cell analysis (single-cell analysis referring to the analysis of single-cell data, not of single cells), particularly in dimension reduction and unsupervised clustering \cite{UMAP, PAGA, Louvain, Leiden}. By encoding data as graphs, the properties of graphs become the properties of the data, gaining us access to many powerful quantitative summaries of the data. But graphs can only partially represent single-cell data (typically encoding cell-to-cell distances): hypergraphs, a generalization of graphs, allow for a complete representation of the data. We therefore encode single-cell data as hypergraphs, and we show how recent advances in spectral hypergraph theory enable us to access the succinct structural properties of these data.

While the edges of a graph are pairwise connections between the vertices, with hypergraphs, links of any cardinality between the nodes are allowed. In other words, in a hypergraph, edges are \emph{sets} of nodes. Here we consider an even more general class of hypergraphs, namely the \emph{hypergraphs with real coefficients} that were introduced in \cite{JostMulas2020}, in which real coefficients can be assigned to the vertex-edge incidences (Figure \ref{fig:hyp}).

\begin{figure}
    \centering
    \includegraphics[width=4cm]{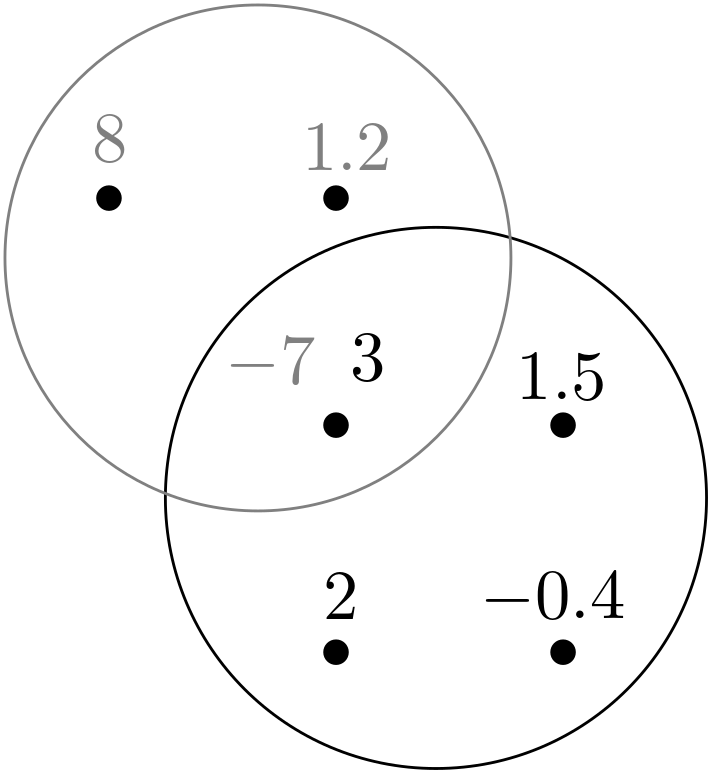}
    \caption{A hypergraph with real coefficients. In this figure, general coefficients in $\mathbb{R}$ are represented. For networks of genetic expression, we shall only consider coefficients in $[0,1]$. }
    \label{fig:hyp}
\end{figure}
Given a data set of gene expression with $N$ cells and $M$ genes, we model it as a hypergraph on $N$ nodes and $M$ edges in which each vertex $i$ represents a cell, each edge $k$ represents a gene, and each coefficient $c(i,k)$ represents the fraction of transcripts in cell $i$ mapping to gene $k$. We assume that the coefficients are normalized with respect to cells, so that 
\begin{equation}\label{eq:normalization}
\sum_k c(i,k)=1 \quad \text{for each cell }i.
\end{equation}

Such cell-wise normalization is the norm in RNA-seq analysis (e.g.~counts per million): the number of counts per cell can vary by orders of magnitude solely due to technical effects, distorting comparisons of absolute gene expression between cells \cite{dillies2013comprehensive, townes2019feature, lause2020analytic}. Accordingly, each coefficient $c(i,k)$ encodes the relative expression of gene $k$ in cell $i$.

We then consider the spectrum of the normalized Laplacian of the obtained hypergraph. As shown in \cite{JostMulas2020}, such spectrum is given by $N$ real, non-negative eigenvalues, and it encodes many important structural properties of the hypergraph, therefore also of the data. If we denote by $\lambda_N$ the largest eigenvalue, then the quantity
\begin{equation*}
\mathcal{R}:=\frac{\lambda_N}{N}
\end{equation*} estimates the \emph{cellular redundancy} of the network, an inverse measure of gene expression heterogeneity. We choose the term `redundancy' to coincide with naming of existing measures in genomics, e.g.~\cite{krakauer2002redundancy}. In fact, as we shall see in Section \ref{section_math}, we have that
\begin{equation*}
    \frac{1}{N} \leq \mathcal{R} \leq 1,
\end{equation*}and $\mathcal{R}=1/N$ if and only if and only if each gene is concentrated in one single cell, that is, we have no cellular redundancy at all. Moreover, we have that $\mathcal{R}=1$ (equivalently, $\lambda_N$ achieves its largest possible value $N$) if and only if each gene is distributed among all cells and 
\begin{equation*}
        c(i,k)=c(j,k)
\end{equation*}for each gene $k$ and for all cells $i\neq j$, i.e., we have cellular redundancy. 

We note that these two extremes of redundancy manifest at extremes of expression heterogeneity: $\mathcal{R}=1$ represents absolute homogeneity in gene expression, where every cell has the same relative expression for each gene. As $\mathcal{R}$ decreases, we move further from absolute homogeneity, arriving at a case of extreme heterogeneity when $\mathcal{R} = \frac{1}{N}$, where each gene is expressed in only a single cell. This restriction in gene expression to fewer and fewer cells as redundancy decreases corresponds to the effect of increasing differential gene expression. As the number of cell types in a population increases, the relative fraction of cells belonging to a given type, and so expressing that types marker genes, will decrease. Thus, cellular redundancy provides a multivariate measure of heterogeneity sensitive to differential gene expression between cell types.

We illustrate the meaning of this redundancy with two toy examples that involve two cells and two genes each. If genes are distributed among cells as in Table \ref{table:ex1}, we clearly expect a high cellular redundancy and in fact, by the above consideration, $\mathcal{R}=1$, hence $\mathcal{R}$ achieves its maximum possible value. Conversely, if genes are distributed among cells as in Table \ref{table:ex2}, we then have low cellular redundancy and we therefore expect a small value of $\mathcal{R}$. As we shall see in details in Example \ref{ex1} below, this is exactly the case.
    \begin{table}[h!]
    	\begin{center}
    		\begin{tabular}{ c| c | c}
    			 & Gene 1 & Gene 2  \\ 
                \hline
                Cell 1 & 0.3 & 0.7 \\
    			Cell 2 & 0.3 & 0.7 
    		\end{tabular}
    	\end{center}\caption{An example of gene expression with maximal cellular redundancy.}
    	\label{table:ex1}
    \end{table}
   \begin{table}[h!]
	\begin{center}
		\begin{tabular}{ c| c | c | c}
			& Gene 1 & Gene 2 \\ 
			\hline
	  Cell 1 & 0.1 & 0.9 \\
	Cell 2 & 1 & 0 
		\end{tabular}
	\end{center}\caption{An example of gene expression with low cellular redundancy.}
	\label{table:ex2}
\end{table}

 \subsubsection*{Structure of the paper} In Section \ref{section_math} we present the mathematical details of our model and we prove, in particular, our main theoretical results. In order to illustrate the method, in Section \ref{section:simulations} we analyze simulated data and in Section \ref{section:real-data} we analyze real data sets of gene expressions. Finally, in Section \ref{section:methods} we discuss the methods and in Section \ref{section:conclusions} we draw some conclusions.

\section{Theoretical results}\label{section_math}
We recall some definitions from \cite{JostMulas2020} before proving our main theoretical results. For completeness, we recall the general definition of a hypergraph with coefficients in $\mathbb{R}$, before specializing to the case of non-negative coefficients (Theorem \ref{thm:general} below) and to the case where the coefficients satisfy \eqref{eq:normalization} (Corollary \ref{cor:first} below). We therefore discuss a mathematical framework which is slightly more general than the one that we consider for networks of genetic expression. This allows us to offer a detailed and precise mathematical discussion, which can also be applied to other kinds of networks in future works.

	\begin{definition}
			A \emph{hypergraph with real coefficients} is a triple $\Gamma=(V,E,\mathcal{C})$ such that:
			\begin{itemize}
				\item $V=\{1,\ldots,N\}$ is a finite set of \emph{nodes} or \emph{vertices};
				\item $E=\{k_1,\ldots,k_M\}$ is a multiset of elements $k_l \in \mathcal{P}(V)\setminus \emptyset$ called \emph{edges}, where $\mathcal{P}(V)$ denotes the power set of $V$;
				\item $\mathcal{C}=\{c(i,k): i\in V\text{ and }k\in E \}$ is a set of \emph{coefficients} $c(i,k)\in\mathbb{R}$ and it is such that
				\begin{equation}\label{eq:zerocoeff}
				    c(i,k)=0 \iff i\notin k.
				\end{equation}
			\end{itemize}
		\end{definition}
		From here on in this section we fix a hypergraph $\Gamma=(V,E,\mathcal{C})$ on $N$ nodes and $M$ edges. We assume that $\Gamma$ has no isolated vertices, i.e., vertices that are not contained in any edge.
		
		
		\begin{remark}
		The fact that the edges of a hypergraph form a multiset allows us to consider distinct edges that contain the same vertices. In the case of networks of genetic expression, in particular, this allows the modelling of distinct genes which are distributed among the same cells, as for instance in the case of Table \ref{table:ex1}.
		\end{remark}
		
		\begin{definition}
    Given $k \in E$, its \emph{cardinality} is the number of vertices that are contained in $k$. Given $i\in V$, its \emph{degree} is
	\begin{equation}\label{eq:defdegree}
	\deg i:=\sum_{k\in E}c(i,k)^2.
	\end{equation}
The $N\times N$ diagonal \emph{degree matrix} of $\Gamma$ is
\begin{equation*}
    D:=\textrm{diag}\bigl(\deg i\bigr)_{i=1,\ldots,N}.
\end{equation*}
The $N\times N$ \emph{adjacency matrix} of $\Gamma$ is $A:=(A_{ij})_{ij}$, where $A_{ii}:=0$ for all $i=1,\ldots,N$ and
    \begin{equation*}
        A_{ij}:=-\sum_{k\in E}c(i,k)\cdot c(j,k)\quad \text{for all }i\neq j.
    \end{equation*}
The $N\times M$ \emph{incidence matrix} of $\Gamma$ is $\mathcal{I}:=(\mathcal{I}_{il})_{il}$, where
	\begin{equation*} 
	\mathcal{I}_{il}:=c(i,k_l).
	\end{equation*}
\end{definition}
Note that in our context, in both models that we are considering, each row of $\mathcal{I}$ represents a cell and each column of $\mathcal{I}$ represents a gene.
\begin{definition}The \emph{normalized Laplacian} of $\Gamma$ is the $N\times N$ matrix
\begin{equation*}
    L:=\mathrm{Id}-D^{-1}A= D^{-1}\mathcal{I}\mathcal{I}^\top.
\end{equation*}The \emph{dual normalized Laplacian} of $\Gamma$ is the $M\times M$ matrix
\begin{equation*}
    L^*:=\mathcal{I}^\top D^{-1}\mathcal{I}.
\end{equation*}
\end{definition}

As shown in \cite{JostMulas2020}, $L$ has $N$ real, non-negative eigenvalues, denoted
$\lambda_1\leq \lambda_2\leq \ldots\leq \lambda_N$ and called the \emph{spectrum of $\Gamma$}, while $L^*$ has $M$ real, non-negative eigenvalues, counted with multiplicity. The non-zero eigenvalues of $L$ and $L^*$ coincide, with the same multiplicity, therefore, in particular, also the largest eigenvalue of $L$ and $L^*$ is the same. This allows us to simplify the computations depending on the size of the data set: If $M\geq N$ (respectively, $N\geq M$), it is convenient to compute the largest eigenvalues and, more generally, the entire non-zero spectrum of $\Gamma$, using $L$ (respectively, $L^*$).\newline

We now prove a general result on the largest eigenvalue of $\Gamma$, before focusing on the case when the coefficients are normalized as in \eqref{eq:normalization}.
\begin{theorem}\label{thm:general}
The largest eigenvalue of $\Gamma$ is such that $1\leq \lambda_N\leq N$. Moreover, if all coefficients are non-negative,
\begin{itemize}
    \item $\lambda_N=1$ if and only if each edge has cardinality $1$;
    \item $\lambda_N=N$ if and only if all edges have cardinality $N$ and, for all vertices $i\neq j$ and edges $k\neq h$,
    \begin{equation}\label{eq:general}
        \frac{c(i,k)}{c(j,k)}=\frac{c(i,h)}{c(j,h)}.
    \end{equation}
\end{itemize}
\end{theorem}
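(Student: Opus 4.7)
The plan is to reduce the eigenvalue analysis of $L$ to that of the symmetric positive semi-definite matrix $\hat L := D^{-1/2}\mathcal{I}\mathcal{I}^\top D^{-1/2}$, to which $L$ is similar. Its diagonal entries equal $\deg(i)^{-1}\sum_k c(i,k)^2 = 1$, so $\mathrm{tr}(L) = \mathrm{tr}(\hat L) = N$ and all eigenvalues are real and non-negative. The general bound $1 \leq \lambda_N \leq N$ then follows at once, since the largest of $N$ non-negative reals with total $N$ lies between the average $1$ and the sum $N$.

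For the characterisation $\lambda_N = 1$, I would first observe that the trace identity together with $\lambda_i \leq \lambda_N = 1$ forces $\lambda_1 = \cdots = \lambda_N = 1$, so $\hat L = \mathrm{Id}$ and hence $\mathcal{I}\mathcal{I}^\top = D$. The off-diagonal entries then give $\sum_k c(i,k)\,c(j,k) = 0$ for every $i \neq j$. Here non-negativity of the coefficients becomes essential: each term in the sum is then non-negative, so every term must vanish, which means no edge $k$ can contain two distinct vertices, i.e., each edge has cardinality $1$. The converse direction is routine: singleton edges make $\mathcal{I}\mathcal{I}^\top$ diagonal, so $L = \mathrm{Id}$ and $\lambda_N = 1$.

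For the characterisation $\lambda_N = N$, the trace identity combined with non-negativity of the eigenvalues forces $\lambda_1 = \cdots = \lambda_{N-1} = 0$, so $\hat L$, and hence $\mathcal{I}\mathcal{I}^\top$ and $\mathcal{I}$, has rank $1$. I would then write $c(i,k_l) = r_i\, s_l$ for some vectors $r, s$. The no-isolated-vertex hypothesis makes every row of $\mathcal{I}$ non-zero, so $r_i \neq 0$; non-emptiness of every edge makes every column non-zero, so $s_l \neq 0$; together with \eqref{eq:zerocoeff} this forces $c(i,k) \neq 0$ for every pair $(i,k)$, meaning every edge has cardinality $N$, and the ratio $c(i,k)/c(j,k) = r_i/r_j$ is independent of $k$, yielding \eqref{eq:general}. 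Conversely, given full cardinality and \eqref{eq:general}, fixing a reference vertex $j_0$ and edge $h_0$ directly produces a rank-one factorisation of $\mathcal{I}$, so $L$ has rank $1$ and its unique non-zero eigenvalue must equal $\mathrm{tr}(L) = N$.

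The delicate point in the whole argument is the step where non-negativity is genuinely used, namely the passage from $\sum_k c(i,k)\,c(j,k) = 0$ to termwise vanishing in the $\lambda_N = 1$ direction; without the sign restriction, cancellation could produce vanishing off-diagonals even when edges contain several vertices. The remaining steps are essentially linear-algebraic consequences of similarity to a symmetric positive semi-definite matrix and standard manipulations of rank-one factorisations.
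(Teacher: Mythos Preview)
Your argument is correct and, in fact, more self-contained than the paper's. The paper establishes the trace identity $\sum_i \lambda_i = N$ by citing \cite[Corollary~3.2]{JostMulas2020}, and for the characterisation of $\lambda_N = N$ it invokes \cite[Theorem~6.3]{JostMulas2020}, which supplies a function $f:V\to\mathbb{R}$ with $c(i,k)f(i)$ independent of $i$; the ratio condition~\eqref{eq:general} is then extracted from the existence of such an $f$. You instead symmetrise to $\hat L = D^{-1/2}\mathcal{I}\mathcal{I}^\top D^{-1/2}$, read off the trace directly from its unit diagonal, and handle $\lambda_N = N$ via a rank-one factorisation of $\mathcal{I}$. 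Your rank argument is arguably cleaner: it bypasses the auxiliary function $f$ entirely, and as a side effect shows that the $\lambda_N = N$ characterisation does not actually require non-negativity (only the $\lambda_N = 1$ direction does, as you correctly isolate). The paper's route is shorter on the page only because the work is outsourced to the cited reference; yours would stand alone.
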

\begin{proof}
As shown in \cite[Corollary 3.2]{JostMulas2020}, $\sum_{i=1}^N\lambda_i=N$. Since  $\lambda_N$ is the largest eigenvalue, this implies that $1\leq \lambda_N\leq N$, and $\lambda_N=1$ if and only if $\lambda_i=1$ for all $i=1,\ldots,N$. Now, since $L=\mathrm{Id}-D^{-1}A$, $1$ is the only eigenvalue of $\Gamma$ if and only if $A=0$. Since all coefficients are non-negative, this happens if and only if two distinct vertices are not contained in common edges, which can be reformulated by saying that each edge has cardinality $1$. This proves the first claim.\newline
Now, by Theorem 6.3 in \cite{JostMulas2020}, since in our case the coefficients are non-negative, $\lambda_N=N$ if and only if all edges have cardinality $N$ and there exists a function $f:V\rightarrow\mathbb{R}$ such that 
\begin{equation*}
    g(k):=c(i,k)f(i)
\end{equation*}does not depend on $i$, for all edges $k$ and for all vertices $i$. This is equivalent to saying that
\begin{equation*}
    c(i,k)f(i)=c(j,k)f(j)
\end{equation*}for all $i\neq j$, or equivalently
\begin{equation*}
    f(j)=\frac{c(i,k)}{c(j,k)}f(i).
\end{equation*}
Such $f$ exists if and only if, for all $i\neq j$, $c(i,k)/c(j,k)$ does not depend on $k$, that is,
\begin{equation*}
    \frac{c(i,k)}{c(j,k)}=\frac{c(i,h)}{c(j,h)}
\end{equation*}for all $k\neq h$ and for all $i\neq j$.
\end{proof}
 
 We now consider the case when the coefficients satisfy the normalization in \eqref{eq:normalization}.
     
     \begin{corollary}\label{cor:first}
If the coefficients satisfy \eqref{eq:normalization}, then the largest eigenvalue is such that $1\leq \lambda_N\leq N$ and, moreover,
\begin{itemize}
    \item $\lambda_N=1$ if and only if each edge has cardinality $1$;
    \item $\lambda_N=N$ if and only if all edges have cardinality $N$ and, for each edge $k$ and for all vertices $i\neq j$,
    \begin{equation}\label{eq:first}
        c(i,k)=c(j,k).
    \end{equation}
\end{itemize}
\end{corollary}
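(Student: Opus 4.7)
The plan is to deduce Corollary \ref{cor:first} directly from Theorem \ref{thm:general}, using the normalization \eqref{eq:normalization} to collapse the proportionality condition \eqref{eq:general} into the stronger equality \eqref{eq:first}. In the gene-expression setting the coefficients lie in $[0,1]$ and are in particular non-negative, so the hypotheses of the theorem are fully satisfied.

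First I would observe that the inequality $1 \leq \lambda_N \leq N$ and the equivalence ``$\lambda_N=1$ iff every edge has cardinality $1$'' transfer verbatim from Theorem \ref{thm:general}, so all the actual work lies in the $\lambda_N=N$ case. For that case, the ``if'' direction is immediate: if all edges have cardinality $N$ and $c(i,k)=c(j,k)$ for all $i\neq j$ and all $k$, then every ratio in \eqref{eq:general} equals $1$, so the theorem applies and yields $\lambda_N=N$.

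For the converse, assume $\lambda_N=N$. Theorem \ref{thm:general} gives that each edge has cardinality $N$ and that for every pair $i\neq j$ the ratio $c(i,k)/c(j,k)$ is independent of $k$; call this common value $\alpha_{ij}$. Rewriting this as $c(i,k)=\alpha_{ij}\, c(j,k)$ for every edge $k$ and summing over $k\in E$, the normalization \eqref{eq:normalization} reduces both sides to $1$, forcing $\alpha_{ij}=1$, which is precisely \eqref{eq:first}.

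I do not anticipate any real obstacle here, as the proof is essentially a one-line manipulation of the characterization provided by the general theorem. The only subtlety to verify is that the ratio $c(i,k)/c(j,k)$ is well-defined; this follows from the cardinality-$N$ condition combined with \eqref{eq:zerocoeff} and non-negativity, which together ensure $c(j,k)>0$ for every vertex $j$ and every edge $k$.
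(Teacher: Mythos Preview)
Your proposal is correct and matches the paper's proof almost verbatim: the paper also reduces everything to Theorem~\ref{thm:general} and then shows that \eqref{eq:general} together with \eqref{eq:normalization} forces $c(i,k)/c(j,k)=1$ by summing over all edges. Your added remark on the well-definedness of the ratios is a helpful clarification that the paper leaves implicit.
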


\begin{proof}By Theorem \ref{thm:general} it is enough to prove that, if the coefficients satisfy \eqref{eq:normalization}, then \eqref{eq:general} is equivalent to \eqref{eq:first}. Clearly, \eqref{eq:first} implies \eqref{eq:general}. Vice versa, \eqref{eq:general} together with \eqref{eq:normalization} implies that
\begin{equation*}
1=\sum_h c(i,h)=\Biggl(\sum_h c(j,h) \Biggr)\cdot \frac{c(i,k)}{c(j,k)}=\frac{c(i,k)}{c(j,k)}.
\end{equation*}Hence, $c(i,k)=c(j,k)$ for each edge $k$ and for all vertices $i\neq j$.
\end{proof}

In our context of networks of genetic expression, we can interpret Corollary \ref{cor:first} as follows. The largest eigenvalue $\lambda_N$ achieves its smallest possible value $1$ if and only if each gene is concentrated in one single cell. Moreover, if the coefficients satisfy the normalization in \eqref{eq:normalization}, then $\lambda_N=N$ if and only if each gene is uniformly distributed among all cells or, equivalently, we have cellular redundancy. In general, the bigger $\lambda_N$ is, the more cellular redundancy there is. We choose to consider the normalized quantity $\mathcal{R}=\lambda_N/N$ as a measure of redundancy, so that is independent of the number of cells.
 
\begin{remark}
Here we focus on the geometric and biological meaning of the largest eigenvalue for networks of genetic expression. In future works, it will be interesting to make a systematic study of the other eigenvalues as well, and to think about alternative networks that can be modelled and analyzed with these tools. As discussed in \cite{JostMulas2020}, hypergraphs with real coefficients offer a valid model for chemical reaction networks and metabolic networks. In particular, in the setting of metabolic pathway analysis, the eigenvalue $0$ is related to the \emph{metabolite balancing equation}. Also, as shown in \cite{mulas2021cheeger}, depending on the hypergraph structure, some eigenvalues may be related to clustering problems. While we do not know a priori what the meaning of the other eigenvalues in our setting is, these results suggest that a thorough study could reveal more interesting insight. In particular, we expect the other eigenvalues to have different geometric interpretations that might be translated into biological terms.
\end{remark}

We conclude this section by discussing the example from the Introduction.

\begin{example}\label{ex1} 
In the example shown by Table \eqref{table:ex2},
\begin{equation*}
\mathcal{I}=\begin{pmatrix}
0.1 & 0.9 \\ 1 & 0
\end{pmatrix} \quad \text{and} \quad D = \begin{pmatrix}
0.82 & 0 \\ 0 & 1
\end{pmatrix}.
\end{equation*}	
	The normalized Laplacian $L=D^{-1}\mathcal{I}\mathcal{I}^\top$ has eigenvalues
	\begin{equation*}
	\lambda_1\sim 0.89 \quad \text{and}\quad \lambda_2\sim 1.11.
	\end{equation*}Hence, while for a general hypergraph on $2$ nodes the redundancy $\mathcal{R}=\lambda_2/2$ is such that $0.5\leq \mathcal{R}\leq 1$, in this case we have $\mathcal{R}\sim 0.55$, i.e., there is a very low cellular redundancy, as expected.
\end{example}
 
\section{Simulations}\label{section:simulations}

We first demonstrate our hypergraph approach on simulated data. Single-cell RNA-sequencing is a counting process, measuring the number of transcripts expressed by each gene in each cell; accordingly, we simulate each gene as a Poisson process \cite{sarkar2020separating}. In practice, single-cell sequencing counts are over-dispersed compared with the Poisson, with additional variability arising from biological heterogeneity. However, for the simulation, we assume there is no excess biological variability, so parameterize each gene by a single mean parameter, $\mu$ \citep{sarkar2020separating, townes2019feature, svensson2020droplet}. We randomly generate $\mu$ for each gene by sampling from a gamma distribution (rate = 0.3; shape = 0.6) \cite{zappia2017splatter}.

We repeatedly simulate 1800 genes over a population of 300 cells (following the median ratio of genes to cells found in the real data sets shown later). Any differences between cells arise solely from stochastic variation in sampling, so we expect cells to be highly redundant: over the course of 100 simulations, we find a mean cellular redundancy of 0.842 (see Table \ref{table:sim}).

\begin{table}[h!]
\begin{center}
\begin{tabular}{ c| c c }
 Data & $\mathcal{R}$ \\ 
 \hline
 Simulated ($C=1$) & 0.842 \\
 Simulated ($C=2$) & 0.447  \\
 Simulated ($C=3$) & 0.316
\end{tabular}
\end{center}\caption{Cellular redundancy of simulations.}
\label{table:sim}
\end{table}

We recreate the effect of differentially expressing cell types in silico by splitting cells into $C$ discrete subpopulations. Each subpopulation is defined by a tranche of highly-expressed genes (of size 1800/$C$, rate = 0.3; shape = 0.6). The highly-expressing tranches of genes do not overlap between differing subpopulations, and genes not included in the high-expression tranche have on average 10x less expression (of size 1800- 1800/$C$, rate = 3; shape = 0.6).

Visualizing $C=3$ via principal component analysis (projection of high dimensional data set onto two dimensions, maximizing the variance retained), we observe that the subpopulations are clearly defined, with no overlap; moreover, the heterogeneity within each subpopulation is far less than the heterogeneity between subpopulations \cite{hotelling1933analysis}. We, therefore, expect a sharp decline in cellular redundancy with the increasing number of differentially expressing subpopulations.

Simulating each $C \in \{1,2,3\}$ 100 times, we find a strong, significant negative correlation between the number of clusters and cellular redundancy (Spearman’s rho = -0.94, p-value $<$ 2.2e-16, two-sided, N = 100): for the simulated data, cellular redundancy captures the increase in population heterogeneity.

\begin{figure}[h!]
\centering
\includegraphics[width=0.3\textwidth]{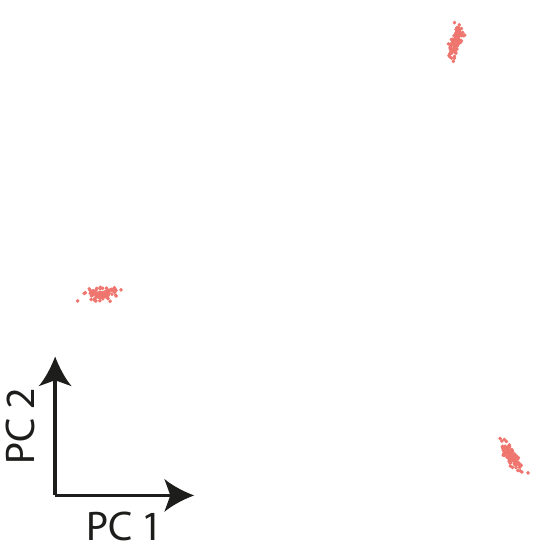}
\caption{Simulation of three distinct clusters. Cells are projected onto the first two dimensions of a principal components analysis of genes.}
\label{fig:Sim}
\end{figure}

\section{Analysis of real data sets}\label{section:real-data}
We further demonstrate our hypergraph approach on four real data sets: 1) Svensson, a technical control data set (4,000 cells-equivalents, 24,116 gene-equivalents); 2) Tian3, a mixture of three human lung adenocarcinoma cell lines, representing a simple, discrete cluster structure (902 cells, 16,468 genes), 3) Tian5, an extension of the Tian3 data set to five adenocarcinoma cell lines (3918 cells, 11,786 genes), and 4) Stumpf, a sampling from mouse bone marrow, representing a complex, continuous cluster structure (5,504 cells, 16,519 genes) \cite{svensson2017power, tian2019benchmarking, stumpf2020transfer}.

\begin{table}[h!]
\begin{center}
\begin{tabular}{ c| c c }
 Data & $\mathcal{R}$ \\ 
 \hline
 Svensson & 0.953 \\  
 Tian3 & 0.825 \\
 Tian5 & 0.771 \\
 Stumpf & 0.398
\end{tabular}
\end{center}\caption{Cellular redundancy of technical (Svensson) and biological data.}
\label{table:real}
\end{table}

\begin{figure}[b!]
\centering
\vspace{1cm}
\includegraphics[width=0.8\textwidth]{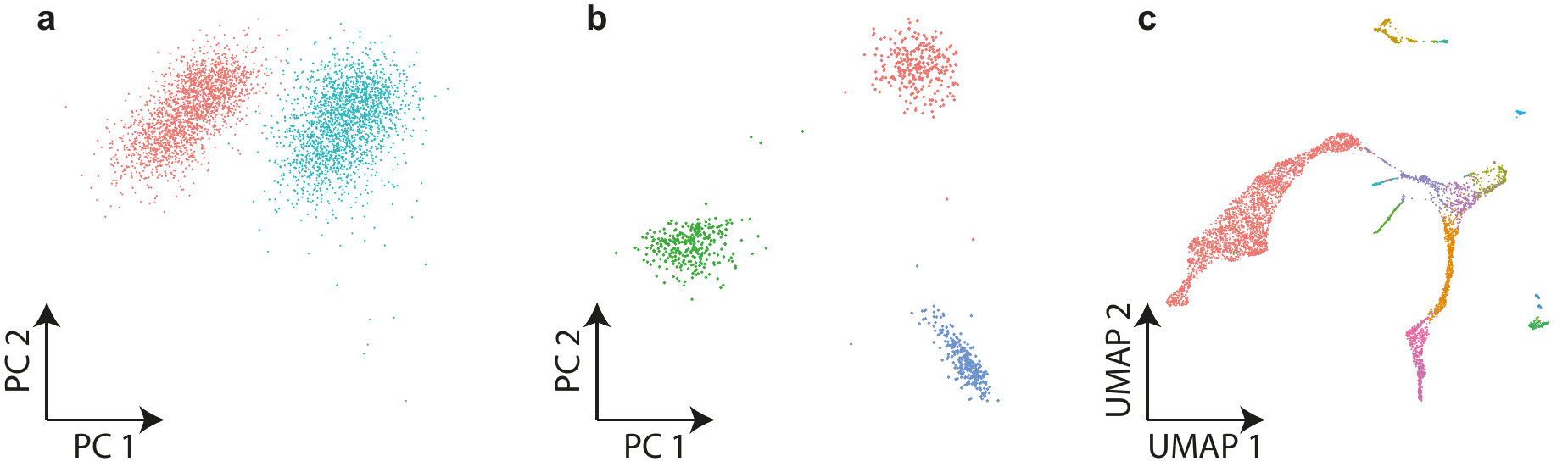}
\caption{Simulation of three data sets: \textbf{a)} Svensson, \textbf{b)} Tian3, \textbf{c} Tian5, and \textbf{d)} Stumpf. In each, cells are colored by cell type (recovered from data metadata) except for d) which is colored by experimental batch. a) - c) project cells onto principal components, and d) onto UMAP dimensions (Uniform Manifold Approximation and Projection), a nonlinear dimension reduction technique \cite{UMAP}.}
\label{fig:Real}
\end{figure}

The technical control data set was generated from a mixed solution of RNA (specifically endogenous RNA from human brain and External RNA Control Consortium spike-ins): variation between the cell-equivalents (the `cells' are not strictly cells, rather technical equivalents) arises solely from technical measurement effects \cite{svensson2020droplet}. The data consists of two experimental batches, each of which forms a distinct cluster of cells in the first two principal components, albeit with some overlap (Figure \ref{fig:Real}a). 

The Tian data sets represent a rare case of a ground truth in single-cell sequencing: as the cells come from multiple cancerous cell lines, the cluster structure is exactly specified by genotype. Tian3 and Tian5 separate into three and five clear clusters, respectively, in the first two principal components (Figure \ref{fig:Real}b\&c). Unlike with the simulated data, not all genes will be involved in the demarcation of cellular groupings\,---\,there will be substantially more overlap in gene expression between cells and so greater cellular redundancy. Moreover, even among those genes that define clusters, they may define multiple clusters or not be as differentially expressed (10-fold difference) as in the simulated data: we expect real data to have less heterogeneity from differential expression, and so display higher levels of cellular redundancy than the corresponding simulations.

This overlap in gene expression is emphasized in the Stumpf data set: cell types clusters are not discrete; instead, they form part of continuous cell differentiation trajectories (Figure \ref{fig:Real}d). Thus, while Stumpf should have the lowest cellular redundancy due to having substantially more cell types than any other data set (14), we expect the reduction in redundancy to be muted by the continuous nature of the cellular identities.

We find that both Svensson and Tian have high levels of cellular redundancy. Tian3 in particular has a cellular redundancy on par with our simulation of a single subpopulation, as the overlap in gene expression substantially increases cellular redundancy. Tian5 displays a modest decrease in cellular redundancy compared to Tian 3 (difference of 0.054), confirming the association between cellular redundancy and cell type number.

\begin{figure}[t!]
\centering
\vspace{1cm}
\includegraphics[width=1\textwidth]{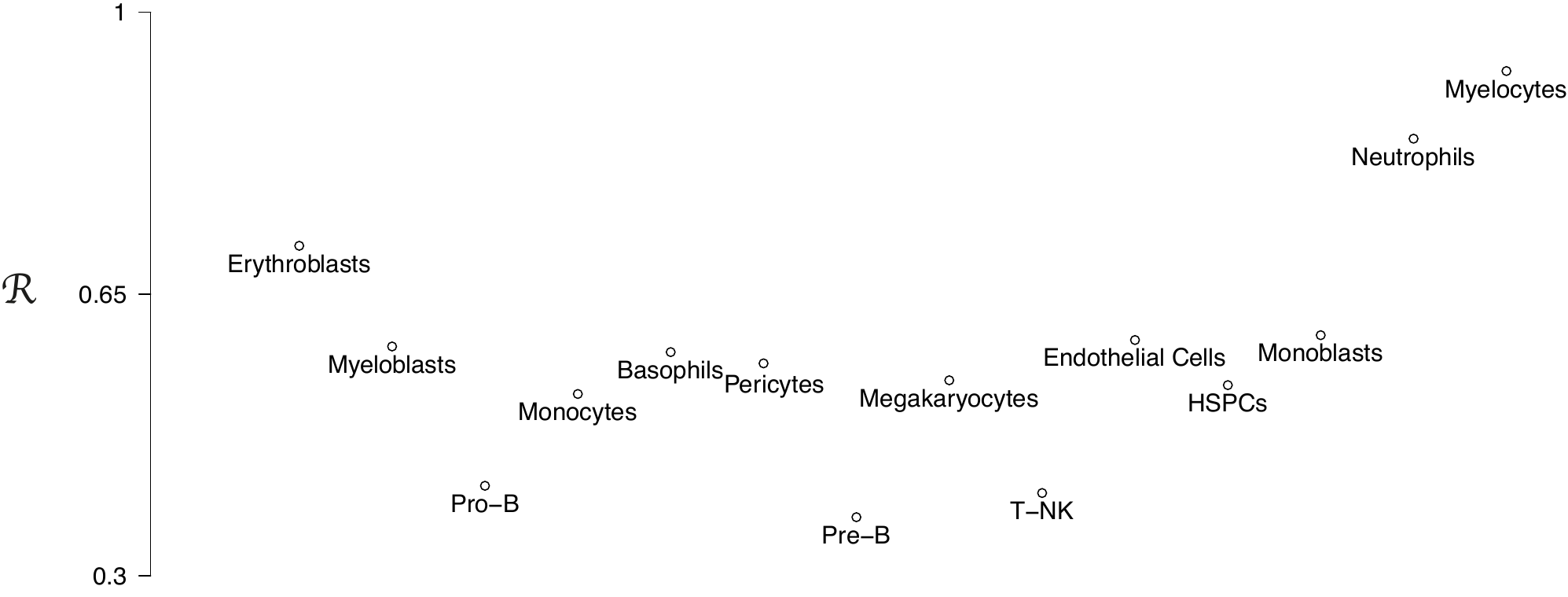}
\caption{Cellular redundancy by cell type for Stumpf data set.}
\label{fig:CellType}
\end{figure}

Stumpf has a cellular redundancy midway between the simulation of two or three subpopulations: continuous variation in gene expression substantially inflates cellular redundancy. Calculating the cellular redundancy of each cell type within the Stumpf data set, we find that most types share a similar level of redundancy, with the notable exceptions of the Pro-B, Pre-B and T-NK (Natural Killer) cell types, and the Neutrophil and Myelocyte cell types (Figure \ref{fig:CellType}). Both sets of cell types form coherent developmental lineages, the lymphocyte and neutrophil lineages respectively \citep{stumpf2020transfer}. Given the consistency otherwise of cell type's cellular redundancy, we suggest that redundancy may detect over-/under-clustering of cells: the lymphocyte-lineage `cell types' may capture a broader continuum of cellular development, so have reduced redundancy; conversely, the neutrophile-lineage `cell types' may represent cell sub-types, with greater cellular redundancy.

\section{Methods}\label{section:methods}


\subsection{Simulation}

We simulated each gene as a Poisson process, with $\mu$ randomly generated from a gamma distribution. We used a pair of gamma distributions: one with a rate of 0.3 for highly-expressed genes and one with a rate of 3 for lowly-expressed genes; both parametrizations used a shape parameter of 0.6.

For each simulation, we simulated a total of 300 cells and 1800 genes. For C of 1, i.e., one subpopulation, all genes for all cells were highly expressing; for C of 2 or 3, half or a third of genes were highly expressed for half or a third of genes, respectively. 

We calculated the mean redundancies for 100 repeats of each simulation. Spearman's rho who was calculated using the R function \textit{cor.test}.

We used the R package Seurat for principle component analyses and UMAP visualizations \cite{Seurat}.
Code for simulations and calculation of redundancy measures is available on GitHub \begin{small}(\url{https://github.com/mjcasy/scHyperGraph}).\end{small}

\subsection{Technical and Biological Data collection}

The Svensson data was downloaded as file ``svensson\_chromium\_control.h5a'' from \begin{small}\url{https://data.caltech.edu/records/1264}\end{small}\newline
The Tian data was downloaded as file ``sincell\_with\_class.RDat'' from \begin{small}\url{https://github.com/LuyiTian/sc_mixology}\end{small}\newline
The Stumpf data was downloaded as file ``RData.zip'' from \begin{small}\url{https://data.mendeley.com/datasets/csvm3kpkxd/1?fbclid=IwAR3j_hvq5Zt0cdxBBM72Fqr8zXwVC6XRpkY2JtVwbcj1gzyNLMVsdoFYWgQ} \end{small}

For each data set, we used the matrix of un-normalized, integer UMI (unique molecular identifier) counts and included all genes with at least one expressed transcript in the analysis. During analysis, each counts matrix is normalized by the total transcripts per cell so that each cell sums to 1.

\section{Conclusions}\label{section:conclusions}

We have introduced hypergraphs to the analysis of networks of genetic expression. We have identified, in particular, a measure that summarizes cellular redundancy and which can be obtained as follows. Given a network of genetic expression with $N$ cells, in which the proportions of transcripts of genes assigned to each cell sum to one, we model it as a hypergraph with real coefficients in which each vertex $i$ models a cell, each edge $k$ models a gene, and each vertex-edge coefficients $c(i,k)$ is the proportion of transcripts of $k$ assigned to $i$. The assumption on the sum of the transcripts, following existing cell-wise normalization strategies \cite{dillies2013comprehensive, townes2019feature, lause2020analytic}, can then be reformulated as $\sum_k c(i,k)=1$, for each cell/vertex $i$. As shown in \cite{JostMulas2020}, the normalized Laplacian matrix associated to this hypergraph has $N$ real, non-negative eigenvalues $\lambda_1\leq \ldots\leq \lambda_N$ whose sum is $N$. These can be easily computed and encode qualitative properties of the hypergraph. The properties of the hypergraph can be translated into properties of the data and, therefore, the eigenvalues represent a signature of the data that can be computed with little computational effort.

We have shown, in particular, that the largest eigenvalue $\lambda_N$ measures how much cellular redundancy is present in the network. Since such measure depends on the number of cells $N$, we considered $\mathcal{R}=\lambda_N/N$ as a normalized measure of cellular redundancy of the network. The choice of this measure is motivated by the theoretical results that we presented and, in order to illustrate our method, we have also analyzed both simulated and real data sets of gene expression. In both types of data, we find cellular redundancy recapitulates the expected heterogeneity of the data: redundancy reduces with an increasing number of clusters in the simulated data and with increasing biological heterogeneity in the real. The data sets lacking cluster structure, the $C=1$ simulation and Svensson data set, displayed the greatest cellular redundancies, confirming the lack of heterogeneity in those data. The closely related data sets among the real, Tian3 and Tian5 displayed a modest difference in redundancy commensurate with the increased number of clusters (3 to 5). Moreover, when broken down by cell type, cellular redundancy identified two distinct developmental lineages with increased and decreased redundancy respectively, suggesting cellular redundancy is informative not only on the number of cell types present, but the nature of those types.

In conclusion, in this work we have taken a step further in the study of spectral hypergraph theory and its applications to biology. It is worth mentioning that, while there is a vast literature on spectral graph theory applied to biochemical networks \cite{symm1,symm2,symm3,bio2006,bio2007,bio2009,bio2009JJ,bio2019}, as well as a growing literature on hypergraph modelling in biology \cite{Estrada2006,KlamtHausTheis,Stadler2015,2014signaling,2019signaling}, very little has been done, so far, in the study of spectral hypergraph theory applied to biology \cite{JostMulas2020,RRB}. Therefore, for future works, we aim to build upon the work presented here practically, applying the measure proposed here to a greater diversity of biological data, and theoretically, developing further novel measures and methods for biological data analysis from the spectra of hypergraphs.


\subsection*{Acknowledgments}
We thank the handling editor and the anonymous referees for the comments and suggestions that have greatly improved the first version of this paper. We thank Hugh Warden for providing the code used in this research. We are grateful to Tobias Goris, Jürgen Jost, Ben MacArthur and Patrick Stumpf for the interesting comments and discussions.

\subsection*{Funding}
RM was supported by The Alan Turing Institute under the EPSRC grant EP/N510129/1.

\bibliography{Genes}

\begin{thebibliography}{10}

\bibitem{bio2009JJ}
A.~Banerjee and J.~Jost.
\newblock Graph spectra as a systematic tool in computational biology.
\newblock {\em Discrete Applied Mathematics}, 157(10):2425--2431, 2009.

\bibitem{Synchronization}
A.~Banerjee and S.~Parui.
\newblock On synchronization in coupled dynamical systems on hypergraphs.
\newblock arXiv:2008.00469.

\bibitem{BATTISTON20201}
F.~Battiston, G.~Cencetti, I.~Iacopini, V.~Latora, M.~Lucas, A.~Patania, J.-G.
  Young, and G.~Petri.
\newblock Networks beyond pairwise interactions: Structure and dynamics.
\newblock {\em Physics Reports}, 874:1--92, 2020.

\bibitem{beyer1999nearest}
K.~Beyer, J.~Goldstein, R.~Ramakrishnan, and U.~Shaft.
\newblock When is “nearest neighbor” meaningful?
\newblock In {\em International conference on database theory}, pages 217--235.
  Springer, 1999.

\bibitem{BKJM}
T.~Böhle, C.~Kuehn, R.~Mulas, and J.~Jost.
\newblock {Coupled Hypergraph Maps and Chaotic Cluster Synchronization}.
\newblock arXiv:2102.02272.

\bibitem{Louvain}
V.D. Blondel, J.-L. Guillaume, R.~Lambiotte, and E.~Lefebvre.
\newblock Fast unfolding of communities in large networks.
\newblock {\em Journal of statistical mechanics: theory and experiment},
  2008(10):P10008, 2008.

\bibitem{breda2021bayesian}
J.~Breda, M.~Zavolan, and E.~van Nimwegen.
\newblock Bayesian inference of gene expression states from single-cell rna-seq
  data.
\newblock {\em Nature Biotechnology}, pages 1--9, 2021.

\bibitem{brennecke2013accounting}
P.~Brennecke, S.~Anders, J.K. Kim, A.A. Ko{\l}odziejczyk, X.~Zhang,
  V.~Proserpio, B.~Baying, V.~Benes, S.A. Teichmann, J.C. Marioni, et~al.
\newblock Accounting for technical noise in single-cell rna-seq experiments.
\newblock {\em Nature methods}, 10(11):1093--1095, 2013.

\bibitem{Carletti}
T.~Carletti, D.~Fanelli, and S.~Nicoletti.
\newblock Dynamical systems on hypergraphs.
\newblock {\em Journal of Physics: Complexity}, 1(3):035006, 2020.

\bibitem{casey2020theory}
M.J. Casey, P.S. Stumpf, and B.D. MacArthur.
\newblock Theory of cell fate.
\newblock {\em Wiley Interdisciplinary Reviews: Systems Biology and Medicine},
  12(2):e1471, 2020.

\bibitem{Moreno}
G.F. De~Arruda, M.~Tizzani, and Y.~Moreno.
\newblock Phase transitions and stability of dynamical processes on
  hypergraphs.
\newblock {\em Communications Physics}, 4(1):1--9, 2021.

\bibitem{dillies2013comprehensive}
M.-A. Dillies, A.~Rau, J.~Aubert, C.~Hennequet-Antier, M.~Jeanmougin,
  N.~Servant, C.~Keime, G.~Marot, D.~Castel, J.~Estelle, et~al.
\newblock A comprehensive evaluation of normalization methods for illumina
  high-throughput rna sequencing data analysis.
\newblock {\em Briefings in bioinformatics}, 14(6):671--683, 2013.

\bibitem{Estrada2006}
E.~Estrada and J.A. Rodríguez-Velázquez.
\newblock Subgraph centrality and clustering in complex hyper-networks.
\newblock {\em Physica A: Statistical Mechanics and its Applications},
  364:581--594, 2006.

\bibitem{Stadler2015}
C.~Flamm, B.M.R. Stadler, and P.F. Stadler.
\newblock Chapter 13---generalized topologies: Hypergraphs, chemical reactions,
  and biological evolution.
\newblock In S.C. Basak, G.~Restrepo, and J.L. Villaveces, editors, {\em
  Advances in Mathematical Chemistry and Applications}, pages 300--328. Bentham
  Science Publishers, 2015.

\bibitem{grun2014validation}
D.~Gr{\"u}n, L.~Kester, and A.~Van~Oudenaarden.
\newblock Validation of noise models for single-cell transcriptomics.
\newblock {\em Nature methods}, 11(6):637--640, 2014.

\bibitem{hafemeister2019normalization}
C.~Hafemeister and R.~Satija.
\newblock Normalization and variance stabilization of single-cell rna-seq data
  using regularized negative binomial regression.
\newblock {\em Genome Biology}, 20(1):1--15, 2019.

\bibitem{hotelling1933analysis}
H.~Hotelling.
\newblock Analysis of a complex of statistical variables into principal
  components.
\newblock {\em Journal of educational psychology}, 24(6):417, 1933.

\bibitem{bio2019}
C.-H. Huang, J.J.P. Tsai, N.~Kurubanjerdjit, and K.-L. Ng.
\newblock Computational analysis of molecular networks using spectral graph
  theory, complexity measures and information theory.
\newblock {\em bioRxiv}, page 536318, 2019.

\bibitem{hwang2018single}
B.~Hwang, J.H. Lee, and D.~Bang.
\newblock Single-cell rna sequencing technologies and bioinformatics pipelines.
\newblock {\em Experimental \& molecular medicine}, 50(8):1--14, 2018.

\bibitem{JostMulas2020}
J.~Jost and R.~Mulas.
\newblock {Normalized Laplace Operators for Hypergraphs with Real
  Coefficients}.
\newblock {\em Journal of Complex Networks}, 9(1):cnab009, 2021.

\bibitem{kiselev2019challenges}
V.Y. Kiselev, T.S. Andrews, and M.~Hemberg.
\newblock Challenges in unsupervised clustering of single-cell rna-seq data.
\newblock {\em Nature Reviews Genetics}, 20(5):273--282, 2019.

\bibitem{KlamtHausTheis}
S.~Klamt, U.U. Haus, and F.~Theis.
\newblock Hypergraphs and cellular networks.
\newblock {\em PLoS Comp. Biol.}, 5(5):e1000385, 2009.

\bibitem{krakauer2002redundancy}
D.C. Krakauer and J.B. Plotkin.
\newblock Redundancy, antiredundancy, and the robustness of genomes.
\newblock {\em Proceedings of the National Academy of Sciences},
  99(3):1405--1409, 2002.

\bibitem{lause2020analytic}
J.~Lause, P.~Berens, and D.~Kobak.
\newblock Analytic pearson residuals for normalization of single-cell rna-seq
  umi data.
\newblock {\em bioRxiv}, 2020.

\bibitem{bio2006}
A.~Lesne.
\newblock Complex networks: from graph theory to biology.
\newblock {\em Letters in Mathematical Physics}, 78(3):235--262, 2006.

\bibitem{liu2020entropy}
B.~Liu, Chenwei Li, Z.~Li, D.~Wang, X.~Ren, and Z.~Zhang.
\newblock An entropy-based metric for assessing the purity of single cell
  populations.
\newblock {\em Nature communications}, 11(1):1--13, 2020.

\bibitem{love2014moderated}
M.I. Love, W.~Huber, and S.~Anders.
\newblock Moderated estimation of fold change and dispersion for rna-seq data
  with deseq2.
\newblock {\em Genome biology}, 15(12):1--21, 2014.

\bibitem{luecken2019current}
M.D. Luecken and F.J. Theis.
\newblock Current best practices in single-cell rna-seq analysis: a tutorial.
\newblock {\em Molecular systems biology}, 15(6):e8746, 2019.

\bibitem{symm2}
B.D. MacArthur and R.J. Sánchez-García.
\newblock Spectral characteristics of network redundancy.
\newblock {\em Physical Review E}, 80(2), 2009.

\bibitem{symm1}
B.D. MacArthur, R.J. Sánchez-García, and J.W. Anderson.
\newblock Symmetry in complex networks.
\newblock {\em Discrete Applied Mathematics}, 156(18):3525--3531, 2008.

\bibitem{bio2007}
O.~Mason and M.~Verwoerd.
\newblock Graph theory and networks in biology.
\newblock {\em IET systems biology}, 1(2):89--119, 2007.

\bibitem{UMAP}
L.~McInnes, J.~Healy, and J.~Melville.
\newblock Umap: Uniform manifold approximation and projection for dimension
  reduction.
\newblock arXiv:1802.03426.

\bibitem{mulas2021cheeger}
R.~Mulas.
\newblock A cheeger cut for uniform hypergraphs.
\newblock {\em Graphs and Combinatorics}, pages 1--22, 2021.

\bibitem{MKJ}
R.~Mulas, C.~Kuehn, and J.~Jost.
\newblock Coupled dynamics on hypergraphs: Master stability of steady states
  and synchronization.
\newblock {\em Physical Review E}, 101:062313, 2020.

\bibitem{RRB}
R.~Mulas, R.J. Sánchez-García, and B.D. MacArthur.
\newblock {Geometry and symmetry in biochemical reaction systems}.
\newblock {\em Theory in Biosciences}, pages 1--13, 2021.

\bibitem{bio2009}
A.D. Perkins and M.A. Langston.
\newblock Threshold selection in gene co-expression networks using spectral
  graph theory techniques.
\newblock In {\em BMC bioinformatics}, volume~10, pages 1--11. BioMed Central,
  2009.

\bibitem{2014signaling}
A.~Ritz, A.N. Tegge, H.~Kim, C.L. Poirel, and T.M. Murali.
\newblock Signaling hypergraphs.
\newblock {\em Trends in biotechnology}, 32(7):356--362, 2014.

\bibitem{saelens2019comparison}
W.~Saelens, R.~Cannoodt, H.~Todorov, and Y.~Saeys.
\newblock A comparison of single-cell trajectory inference methods.
\newblock {\em Nature biotechnology}, 37(5):547--554, 2019.

\bibitem{sarkar2020separating}
A.~Sarkar and M.~Stephens.
\newblock Separating measurement and expression models clarifies confusion in
  single-cell rna sequencing analysis.
\newblock {\em Nature Genetics}, pages 1--8, 2021.

\bibitem{2019signaling}
M.~Schwob, J.~Zhan, and A.~Dempsey.
\newblock Modeling cell communication with time-dependent signaling
  hypergraphs.
\newblock {\em IEEE/ACM transactions on computational biology and
  bioinformatics}, 2019.

\bibitem{symm3}
R.J. Sánchez-García.
\newblock Exploiting symmetry in network analysis.
\newblock {\em Commun. Phys.}, 3(87), 2020.

\bibitem{stuart2019comprehensive}
T.~Stuart, A.~Butler, P.~Hoffman, C.~Hafemeister, E.~Papalexi, W.M. Mauck~III,
  Y.~Hao, M.~Stoeckius, P.~Smibert, and R.~Satija.
\newblock Comprehensive integration of single-cell data.
\newblock {\em Cell}, 2019.

\bibitem{Seurat}
T.~Stuart, A.~Butler, P.~Hoffman, C.~Hafemeister, E.~Papalexi, W.M. Mauck~III,
  Y.~Hao, M.~Stoeckius, P.~Smibert, and R.~Satija.
\newblock Comprehensive integration of single-cell data.
\newblock {\em Cell}, 2019.

\bibitem{stumpf2020transfer}
P.S. Stumpf, H.~Imanishi, Y.~Kunisaki, Y.~Semba, T.~Noble, R.~Smith,
  M.~Rose-Zerilli, J.~West, R.~Oreffo, K.~Farrahi, et~al.
\newblock Transfer learning efficiently maps bone marrow cell types from mouse
  to human using single-cell rna sequencing.
\newblock {\em Communications Biology}, 2020.

\bibitem{svensson2020droplet}
V.~Svensson.
\newblock Droplet scrna-seq is not zero-inflated.
\newblock {\em Nature Biotechnology}, 38(2):147--150, 2020.

\bibitem{svensson2017power}
V.~Svensson, K.N. Natarajan, L.-H. Ly, R.J. Miragaia, C.~Labalette, I.C.
  Macaulay, A.~Cvejic, and S.A. Teichmann.
\newblock Power analysis of single-cell rna-sequencing experiments.
\newblock {\em Nature methods}, 14(4):381, 2017.

\bibitem{tian2019benchmarking}
L.~Tian, X.~Dong, S.~Freytag, K.-A. L{\^e}~Cao, S.~Su, A.~JalalAbadi,
  D.~Amann-Zalcenstein, T.S. Weber, A.~Seidi, J.~S. Jabbari, et~al.
\newblock Benchmarking single cell rna-sequencing analysis pipelines using
  mixture control experiments.
\newblock {\em Nature methods}, 16(6):479--487, 2019.

\bibitem{townes2019feature}
F.W. Townes, S.C. Hicks, M.J. Aryee, and R.A. Irizarry.
\newblock Feature selection and dimension reduction for single-cell rna-seq
  based on a multinomial model.
\newblock {\em Genome biology}, 20(1):1--16, 2019.

\bibitem{Leiden}
V.A. Traag, L.~Waltman, and N.J. Van~Eck.
\newblock From louvain to leiden: guaranteeing well-connected communities.
\newblock {\em Scientific reports}, 9(1):1--12, 2019.

\bibitem{trapnell2015defining}
C.~Trapnell.
\newblock Defining cell types and states with single-cell genomics.
\newblock {\em Genome research}, 25(10):1491--1498, 2015.

\bibitem{vandenbon2020clustering}
A.~Vandenbon and D.~Diez.
\newblock A clustering-independent method for finding differentially expressed
  genes in single-cell transcriptome data.
\newblock {\em Nature communications}, 11(1):1--10, 2020.

\bibitem{von2012clustering}
U.~Von~Luxburg, R.C. Williamson, and I.~Guyon.
\newblock Clustering: Science or art?
\newblock In {\em Proceedings of ICML workshop on unsupervised and transfer
  learning}, pages 65--79. JMLR Workshop and Conference Proceedings, 2012.

\bibitem{PAGA}
F.A. Wolf, F.K. Hamey, M.~Plass, J.~Solana, J.S. Dahlin, B.~G{\"o}ttgens,
  N.~Rajewsky, L.~Simon, and F.J. Theis.
\newblock Paga: graph abstraction reconciles clustering with trajectory
  inference through a topology preserving map of single cells.
\newblock {\em Genome biology}, 20(1):1--9, 2019.

\bibitem{xia2019periodic}
B.~Xia and I.~Yanai.
\newblock A periodic table of cell types.
\newblock {\em Development}, 146(12):dev169854, 2019.

\bibitem{zappia2017splatter}
L.~Zappia, B.~Phipson, and A.~Oshlack.
\newblock Splatter: simulation of single-cell rna sequencing data.
\newblock {\em Genome biology}, 18(1):1--15, 2017.

\end{thebibliography}

\end{document}